\documentclass[11pt]{article}

\usepackage[margin=1in]{geometry}
\usepackage{amsmath,amssymb,amsthm,mathtools}
\usepackage{algorithm}
\usepackage{algpseudocode}
\usepackage{microtype}
\usepackage{xcolor}
\usepackage{hyperref}
\usepackage[nameinlink,capitalise,noabbrev]{cleveref}

\definecolor{newblue}{rgb}{0.19,0.55,0.91}
\hypersetup{
    colorlinks=true,
    linkcolor=blue,
    citecolor=newblue,
    urlcolor=blue!60!black,
    pdftitle={Pairwise-Independent Dithering for Single-Stage Hadamard Quantization},
    pdfauthor={Honghao Lin},
    pdfsubject={Randomized Hadamard quantization and inner-product estimation}
}

\numberwithin{equation}{section}

\newtheorem{theorem}{Theorem}[section]
\newtheorem{lemma}[theorem]{Lemma}
\newtheorem{proposition}[theorem]{Proposition}
\newtheorem{corollary}[theorem]{Corollary}
\theoremstyle{definition}

\crefname{theorem}{Theorem}{Theorems}
\Crefname{theorem}{Theorem}{Theorems}
\crefname{lemma}{Lemma}{Lemmas}
\Crefname{lemma}{Lemma}{Lemmas}
\crefname{proposition}{Proposition}{Propositions}
\Crefname{proposition}{Proposition}{Propositions}
\crefname{corollary}{Corollary}{Corollaries}
\Crefname{corollary}{Corollary}{Corollaries}
\crefname{definition}{Definition}{Definitions}
\Crefname{definition}{Definition}{Definitions}
\crefname{remark}{Remark}{Remarks}
\Crefname{remark}{Remark}{Remarks}

\DeclarePairedDelimiter\norm{\lVert}{\rVert}
\DeclarePairedDelimiter\abs{\lvert}{\rvert}
\newcommand{\R}{\mathbb{R}}
\newcommand{\E}{\mathbb{E}}
\newcommand{\Sph}{\mathbb{S}}
\newcommand{\eps}{\varepsilon}
\newcommand{\dd}{\,\mathrm{d}}
\newcommand{\Unif}{\operatorname{Unif}}
\newcommand{\diag}{\operatorname{diag}}
\newcommand{\Q}{\mathsf{Q}}
\newcommand{\enc}{\mathsf{Enc}}
\newcommand{\dec}{\mathsf{Dec}}

\title{Pairwise-Independent Dithering for
Single-Stage Hadamard Quantization}
\date{}

\author{
Honghao Lin\footnote{Google Research, Carnegie Mellon University / Texas A\&M University. \texttt{honghaol3010@gmail.com}}
\and
Vahab Mirrokni\footnote{Google Research. \texttt{mirrokni@google.com}}
\and
David P. Woodruff\footnote{Google Research and Carnegie Mellon University. \texttt{dpwoodru@gmail.com }}
}

\begin{document}
\maketitle

\begin{abstract}
Quantizing high-dimensional vectors is fundamental to similarity search,
distributed learning, and model compression.  Feng, Indyk, Kapralov, Krachun,
and Prokhorov established sharp guarantees for an unbiased dithered quantizer
based on a randomized Hadamard transform~\cite{feng2026provable}.  Their
$1/d$-scale inner-product estimator, however, uses a second randomized
transform and residual quantization, increasing both communication and the
leading constant in the proved bound.  We show that this extra stage is
unnecessary: pairwise-independent dithers across Hadamard coordinates suffice.
The resulting unbiased single-stage estimator uses $b$ bits per coordinate and
achieves
\[
    \E\abs{\langle y,\widehat{x}-x\rangle}^{2}
    \leq
    \left(\frac{3\pi\sqrt{3}}{2}+o(1)\right)
    \frac{\norm{y}_2^2}{d\,4^b},
\]
as $b\to\infty$, with a dimension-free $o(1)$ term uniform over unit inputs
and fixed queries.
Compared with the two-stage construction of Feng et al., it eliminates the
residual-stage $O(d)$-bit payload and reduces the leading upper-bound constant
by a factor of approximately $5.93$.

The proof was first obtained using a fully automated Gemini-based agentic
system developed internally at Google.  The authors have verified the proof and
edited it for clarity of presentation.
\end{abstract}

\section{Introduction}
\label{sec:introduction}

Vector quantization represents a high-dimensional vector by a short binary
string while retaining the geometric information needed by downstream tasks.
At $b$ bits per coordinate, a vector quantizer consists of an encoder
$\enc:\R^d\to\{0,1\}^{bd}$ and a decoder
$\dec:\{0,1\}^{bd}\to\R^d$; on input $x\in\R^d$, it returns the reconstruction
$\widehat{x}=\dec(\enc(x))$.  The encoder and decoder may use shared
randomness, which is not counted toward the representation length.  The goal is
to preserve geometric information about $x$, such as reconstruction accuracy,
distances, and inner products, while keeping the representation small.  This
primitive is used in similarity search
\cite{charikar2002,jegou2011,datar2004,dong2008,gao2024,gao2025optimal},
federated learning
\cite{drive2021,vargaftik2022,benbasat2024quick}, efficient transformers and
KV-cache compression \cite{zandieh2026}, and model compression
\cite{gholami2022,ashkboos2024,tseng2024}.

A widely used data-oblivious template has the form
$\enc(x)=\operatorname{quant}(Rx)$, where $R$ is a random rotation matrix and
$\operatorname{quant}(\cdot)$ is often a scalar quantizer applied
coordinatewise \cite{charikar2002,datar2004,dong2008,ashkboos2024}.  Broader
variants replace the scalar rule by structured vector codebooks
\cite{andoni2015,tseng2024}.  The rotation distribution and the codebook are
chosen independently of the dataset, so no data-dependent training is required.
Moreover, the method is amenable to analysis because each coordinate of $Rx$ is
approximately normally distributed when $R$ is a uniformly random rotation
\cite{feng2026provable}.
Variants of this template include DRIVE and EDEN for distributed mean estimation
\cite{drive2021,vargaftik2022}, RaBitQ for distance and inner-product
estimation \cite{gao2024,gao2025optimal}, and TurboQuant
\cite{zandieh2026}.

Despite these advantages, applying a dense random rotation matrix $R$ requires
$\Theta(d^2)$ time, which can be prohibitively expensive in high dimensions.  A
standard
way to reduce this cost is to use randomized Hadamard transforms
\cite{ailon2009,sarlos2006,woolfe2008,krahmer2011,halko2011}.  Specifically,
one replaces $R$ with $HD$, where $H$ is a normalized Hadamard matrix and $D$
is a diagonal matrix of independent random signs.  For the Walsh--Hadamard
choice, the resulting transform can be applied in $O(d\log d)$ time using the
fast Walsh--Hadamard algorithm.
However, this speedup comes at an analytical cost: $HD$ is discrete, and each
coordinate of $HDx$ is an input-dependent weighted sum of the random signs;
for worst-case inputs, this sum need not be approximately normal
\cite{feng2026provable}.

Feng, Indyk, Kapralov, Krachun, and Prokhorov
\cite{feng2026provable} overcame this difficulty for vector distortion by
combining a single randomized Hadamard transform with a dithered,
Gaussian-companded scalar quantizer.  At $b$ bits per coordinate, their
reconstruction is exactly unbiased and satisfies, uniformly over unit inputs,
\[
    \E\norm{\widehat{x}-x}_2^2
    \leq
    \left(\frac{\pi\sqrt{3}}{2}+o(1)\right)4^{-b},
\]
as $b\to\infty$.  For fixed-query inner products, their separate construction
uses a closely related two-stage scheme: after an initial randomized Hadamard
quantization, it projects the reconstruction onto the unit ball and quantizes
the residual after a fresh randomized Hadamard transform.  The residual stage
yields the desired $1/d$ scaling, but it also adds communication and enlarges
the leading constant in the proved bound.

In this work, we show that the residual stage is unnecessary.  In the base
quantizer, one scalar dither is shared across all Hadamard coordinates, so the
coordinate errors can remain correlated even though each is centered.  We
replace the shared dither by pairwise-independent dithers.  Conditioned on
$D$, the errors
are then centered and pairwise independent, so every off-diagonal term in the
inner-product second moment vanishes.  Thus pairwise independence suffices for
the desired second-moment bound.

\subsection{Related work}

\paragraph{Random rotations and coordinate-wise quantization.}
Random projection followed by scalar quantization is classical in similarity
estimation and locality-sensitive hashing
\cite{charikar2002,datar2004,dong2008}.  In distributed mean estimation,
Suresh et al. combined a randomized Hadamard transform with stochastic
quantization \cite{suresh2017}, while DRIVE and EDEN developed rotation-based
schemes that include unbiased reconstruction and flexible rates
\cite{drive2021,vargaftik2022}.  Thus this template, including unbiased
variants, predates TurboQuant.  In vector search, RaBitQ and its multi-bit
extension give rigorous distance and inner-product guarantees, including an
asymptotically optimal space--error tradeoff
\cite{gao2024,gao2025optimal}, consistent with the general compression bounds
of Alon and Klartag \cite{alon2017optimal}.  TurboQuant later combined a
dense-rotation MSE quantizer with residual quantization
\cite{zandieh2026}.  Its relationship to DRIVE/EDEN and RaBitQ is discussed in
\cite{benbasat2026note,gao2026comparison}.

\paragraph{Dithering.}
Randomized offsets have a long history in quantization theory, where they are
used to control bias and the dependence of quantization noise on the input
\cite{schuchman1964,zamir1992}.  Dithered scalar grids have also been combined
with random projections for similarity and distance estimation
\cite{datar2004,dong2008}.  Feng et al.~\cite{feng2026provable} use one scalar
offset shared across all transformed coordinates.  We leave their scalar
quantizer unchanged and modify only the dependence structure of the offsets.

\paragraph{Fast structured rotations.}
Randomized Hadamard transforms are central to fast embedding and randomized
linear-algebra methods \cite{ailon2009,halko2011,krahmer2011,tropp2011}.
Approximation of dense random rotations by one or more such transforms has
been studied for a single transform \cite{cherapanamjeri2022}, two structured
blocks \cite{zilca2026}, and multiple transforms with blockwise dependence
control \cite{benbasat2026}.  Direct single-transform quantization analyses
appear in distributed mean estimation \cite{benbasat2024quick} and, using
dithering, in the sharp high-rate result of Feng et al.
\cite{feng2026provable}.

\paragraph{Relation to Feng et al.}
Our scalar quantizer is exactly the unbiased Gaussian-companded construction
of Feng et al.~\cite{feng2026provable}.  We invoke their reconstruction map,
averaging identity, and scalar error estimates directly.  Our new
ingredients are the pairwise-independent dither family, conditional covariance
diagonalization, and the weighted mixed-Rademacher estimate for the remaining
diagonal terms.

\subsection{Our result}

We work first in a dimension $d$ admitting a normalized Hadamard matrix; in
particular, one may take $d$ to be a power of two.
Let $B=2^b$.
The encoder and decoder share the signs in $D$ and two independent random
variables $V_0,V_1\sim\Unif[0,1)$.
For coordinate $i\in\{1,\dots,d\}$, define
\begin{equation}
    U_i=(V_0+iV_1)\bmod 1.
    \label{eq:intro-dither}
\end{equation}
The family $(U_i)_{i=1}^d$ is pairwise independent and each $U_i$ is uniform
on $[0,1)$.
The quantizer applies the scalar unbiased reconstruction rule with dither
$U_i$ to the $i$th coordinate of $\sqrt{d}HDx$ and transmits one symbol from
$\{0,\dots,B-1\}$ per coordinate.

\begin{theorem}[Main theorem]
\label[theorem]{thm:main}
Let $d$ admit a normalized Hadamard matrix, let $b\geq2$ be an integer, and
set $B=2^b$.
For every fixed $x\in\Sph^{d-1}$ and $y\in\R^d$, the single-stage quantizer in
\cref{alg:quantizer}, with $D$ independent of $(V_0,V_1)$, satisfies
\begin{equation}
    \E[\widehat{x}]=x,
    \label{eq:main-unbiased}
\end{equation}
and
\begin{equation}
    \E\abs{\langle y,\widehat{x}-x\rangle}^{2}
    \leq
    \left(\frac{3\pi\sqrt{3}}{2}+o_B(1)\right)
    \frac{\norm{y}_2^2}{dB^2}.
    \label{eq:main-mse}
\end{equation}
The term $o_B(1)$ tends to zero as $B\to\infty$ uniformly over all dimensions
$d$ admitting a normalized Hadamard matrix, all $x\in\Sph^{d-1}$, and all
nonzero $y\in\R^d$.
The communicated representation consists of exactly $db$ bits, excluding the
shared randomness.
\end{theorem}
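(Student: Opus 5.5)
The plan is to condition on $D$ and reduce to scalar facts about the Feng et al.\ quantizer. Write $z=\sqrt{d}HDx$, so $z_i=\sum_j H'_{ij}\sigma_j x_j$ with $H'=\sqrt{d}H$ having $\pm1$ entries. The reconstruction is $\widehat{x}=\frac{1}{\sqrt{d}}DH^{\top}\widehat{z}$, where $\widehat{z}_i=\Q(z_i;U_i)$ is the scalar unbiased Gaussian-companded reconstruction with dither $U_i$. I will use the averaging identity of Feng et al.\ in the form $\E_{U}[\Q(t;U)]=t$ for every fixed $t$ when $U\sim\Unif[0,1)$; this is the scalar unbiasedness they prove. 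First I will check that $U_i=(V_0+iV_1)\bmod1$ is uniform, which follows from the $V_0$ shift. I will also check that $(U_i,U_j)$ is uniform on $[0,1)^2$ for $i\ne j$. Since $d$ may exceed any modulus issue, I will argue this directly: $(V_0,V_1)\mapsto(V_0+iV_1,V_0+jV_1)\bmod1$ is a measure-preserving surjection of the torus whenever the determinant $j-i\neq0$. Conditioned on $D$, the errors $e_i=\widehat{z}_i-z_i$ are then centered. Moreover $e_i$ is a function of $U_i$ alone, because $z_i$ is fixed given $D$. So the errors are pairwise independent given $D$, and $\E[e_ie_j\mid D]=0$ for $i\neq j$. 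Unbiasedness \eqref{eq:main-unbiased} follows immediately by linearity: $\E[\widehat{x}\mid D]=\frac{1}{\sqrt d}DH^\top z=x$.

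For \eqref{eq:main-mse}, let $w=\frac{1}{\sqrt d}HDy$, so $\norm{w}_2=\norm{y}_2$ and $\langle y,\widehat{x}-x\rangle=\sum_i w_ie_i$. By the vanishing cross terms,
\[
\E\bigl[\abs{\langle y,\widehat{x}-x\rangle}^2\mid D\bigr]=\sum_i w_i^2\,m(z_i),\qquad m(t):=\E_U[(\Q(t;U)-t)^2].
\]
Here $w_i$ and $z_i$ both depend on $D$, so the full expectation is $\E_D\sum_i w_i^2 m(z_i)$. The scalar error estimate of Feng et al.\ should give $m(t)\le \frac{c(t)}{B^2}$ with an explicit profile. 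For the Gaussian compander this is essentially $m(t)\approx \frac{2\pi}{12B^2}e^{t^2/2}\cdot(\text{const})$ in the bulk, plus a tail control. Their vector bound is equivalent to $\frac1d\sum_i\E_D m(z_i)\le(\frac{\pi\sqrt3}{2}+o(1))B^{-2}$, which is the unweighted version. What remains is a weighted version in which the weight $w_i^2$ is correlated with $z_i$ through $D$.

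The main obstacle is this weighted mixed-Rademacher estimate. Expanding, $w_i^2=\frac1d\sum_{j,k}H'_{ij}H'_{ik}\sigma_j\sigma_k y_jy_k$, and $z_i=\sum_l H'_{il}\sigma_l x_l$ is a Rademacher sum in the same signs. Since $m$ is even in $t$, $\E[\sigma_j\sigma_k m(z_i)]$ is a degree-two correlation. My first attempt will be to bound $m(t)\le \frac{A}{B^2}g(t)$ with $g$ a tractable majorant, such as $g(t)=\alpha\cosh(\beta t)$ or $e^{t^2/2}$ truncated at the quantizer range. I will then bound $\E_D[w_i^2 g(z_i)]$ for each $i$ separately. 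Hölder or Cauchy--Schwarz alone would lose the constant, so I aim to show $\E[w_i^2 g(z_i)]\le \frac{\norm y^2}{d}\bigl(\E g(G)\cdot 3+o(1)\bigr)$. The factor $3$, accounting for $3\pi\sqrt3/2$ versus $\pi\sqrt3/2$, should arise from the worst case of the quadratic form. One option is a fourth-moment-type argument: $\E[\sigma_j\sigma_k\cosh(\beta z)]$ factorizes over coordinates for exponential $g$, giving products $\prod_l\cosh(\beta x_l)$ times $\tanh$ factors. Another is a decoupling bound in which the diagonal $j=k$ contributes $\E g$ and the off-diagonal part contributes at most $2\E g$. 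Summing over $i$ with $\sum_i w_i^2$-type normalization then gives the claimed $\frac{\norm y^2}{dB^2}$ bound. I expect uniformity in $d$ and $x$ to come from the fact that the estimates depend only on $\norm{x}_2=1$ through products of $\cosh$ that are bounded by $e^{\beta^2/2}$.

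The bit count is immediate, since each coordinate transmits one of $B=2^b$ symbols, for a total of $db$ bits.
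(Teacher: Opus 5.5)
Your reduction is correct and matches the paper. The torus argument is a valid alternative to the paper's conditioning proof of pairwise independence: the integer matrix $\begin{pmatrix}1&i\\1&j\end{pmatrix}$ has determinant $j-i\neq0$, so it pushes Haar measure on $[0,1)^2$ forward to itself. Conditional centering plus pairwise independence given $D$ then give unbiasedness and remove all cross terms.

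There is one bookkeeping slip. For $w=\frac{1}{\sqrt d}HDy$ one has $\norm{w}_2=\norm{y}_2/\sqrt d$, not $\norm{y}_2$. So the per-coordinate target is $\frac{\norm{y}_2^2}{d^2}\bigl(3\E g(G)+o(1)\bigr)$, not $\frac{\norm{y}_2^2}{d}(\cdots)$. As you wrote it, summing over $d$ coordinates loses exactly the factor $1/d$ the theorem is about.

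The genuine gap is elsewhere. The weighted estimate $\E[Y^2(\Q_U(X)-X)^2]\le(3\pi\sqrt3/2+o_B(1))B^{-2}$, for two unit Rademacher sums in the same signs, carries the entire content of the constant. You only announce it, and the tools you list would not deliver it.

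Three ingredients are missing.

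\emph{The bulk profile.} It is $e^{t^2/3}$, not $e^{t^2/2}$. Since $F(t)=\Phi(t/\sqrt3)$, one has $1/(12F'(t)^2)=\frac{\pi}{2}e^{t^2/3}$, so $m(t)\le(1+o_B(1))\frac{\pi}{2}B^{-2}e^{t^2/3}$ for $\abs{t}\le M_B=\sqrt{5\log B}$. With $e^{t^2/2}$, even the Gaussian average truncated at $M_B$ grows like $\sqrt{\log B}$, so no uniform constant results.

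\emph{The majorant.} No single $\alpha\cosh(\beta t)$ dominating $e^{t^2/3}$ on $\abs{t}\le M_B$ has uniformly bounded expectation. For nearly Gaussian $X$ its expectation approaches $\alpha e^{\beta^2/2}\ge e^{M_B^2/18}=B^{5/18}$. What is needed is the exact tilted computation you only gesture at, followed by a Gaussian average over the tilt. With $t_j=\tanh(\lambda a_j)$,
\[
    \E[Y^2e^{\lambda X}]
    =\prod_j\cosh(\lambda a_j)\Bigl[\sum_jc_j^2(1-t_j^2)+\Bigl(\sum_jc_jt_j\Bigr)^2\Bigr]
    \leq(1+\lambda^2)e^{\lambda^2/2}
\]
by Cauchy--Schwarz and $\abs{\tanh u}\le\abs{u}$. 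Then $e^{X^2/3}=\E_g e^{\sqrt{2/3}\,gX}$ with $g\sim\mathcal{N}(0,1)$ gives
\[
    \E[Y^2e^{X^2/3}]\le\E_g\bigl[(1+\tfrac23g^2)e^{g^2/3}\bigr]=\sqrt3+2\sqrt3=3\sqrt3.
\]
This is exactly your heuristic that the diagonal contributes $\E e^{G^2/3}$ and the off-diagonal at most twice that, but it has to be proved this way.

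\emph{The tail.} You do not treat $\abs{X}>M_B$, where only the crude bound $m(t)\le C(1+t^2+\delta^2e^{t^2/3})$ is available. Because $Y^2$ is correlated with $X$, an unweighted tail bound for $X$ is not enough. You need a weighted tail bound such as $\E[Y^2\mathbf{1}_{\{\abs{X}>s\}}]\le2(1+s^2)e^{-s^2/2}$, obtained from Markov plus the tilted bound at $\lambda=s$. Layer-cake integration then shows this region contributes $o(B^{-2})$ uniformly in $d$, $x$, $y$.
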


The two-stage inner-product construction of Feng et
al.~\cite[Theorem~1.2]{feng2026provable} has proved leading constant
\[
    13\left(\frac{\pi\sqrt{3}}{2}+1\right)\approx48.369
\]
and can be encoded using at most
\[
    db+\left(3+\frac{1}{2\ln2}\right)d
    +O(\log(b+\log d))
\]
bits.  Thus our single-stage estimator removes the residual payload.  Relative
to this stated upper bound, its communication guarantee is smaller by
approximately $3.72d$ bits, up to lower-order terms.  It also reduces the proved
leading constant from approximately $48.369$ to $8.162$, an improvement by a
factor of approximately $5.93$, while preserving the vector mean-squared
distortion guarantee of the base quantizer.

\subsection{Technical overview}

We use the scalar quantizer introduced by Feng et
al.~\cite{feng2026provable}.  It first maps a scalar $t$ to quantile space via
$F(t)=\Phi(t/\sqrt{3})$ and then quantizes $F(t)$ on a uniformly shifted
$B$-level grid.  The decoder maps the selected grid point back to the real
line using a modified inverse-compander $G_B$.  If $\Q_U(t)$ denotes the
reconstruction with dither $U$, the construction satisfies the exact averaging
identity $\E_U[\Q_U(t)]=t$ and, in the central region, the variance bound
\[
    \E_U\bigl[(\Q_U(t)-t)^2\bigr]
    \leq
    \left(\frac{\pi}{2}+o_B(1)\right)B^{-2}e^{t^2/3}.
\]
A coarser global bound handles the tails.  Feng et al. use a single scalar
dither across all Hadamard coordinates.  In their separate inner-product
construction, the first-stage reconstruction is projected onto the unit ball
and the resulting residual is quantized after a fresh randomized Hadamard
transform; the fresh transform provides the decorrelation needed for the $1/d$
variance reduction.  Our construction instead builds this decorrelation into
the unbiased single-stage quantizer: we replace the shared dither by the
pairwise-independent family $(U_i)_{i=1}^d$ and use the reconstruction directly.

The role of pairwise independence becomes clear upon expanding the
inner-product error.  Set
$z=\sqrt{d}HDx$, $e_i=\Q_{U_i}(z_i)-z_i$, and, for $y\neq0$,
$w_i=\sqrt{d}(HDy)_i/\norm{y}_2$.  Orthogonality gives
\[
    \langle y,\widehat{x}-x\rangle
    =\frac{\norm{y}_2}{d}\sum_{i=1}^d w_i e_i.
\]
Once $D$ is fixed, the quantities $z_i,w_i$ are deterministic and $e_i$
depends only on $U_i$.  The averaging identity centers each error, while
pairwise independence eliminates the cross terms:
$\E[e_i\mid D]=0$ and $\E[e_ie_j\mid D]=0$ for $i\neq j$.  Since the target is
a second moment, no higher-order independence is needed.  Averaging over $D$
therefore gives
\[
    \E\abs{\langle y,\widehat{x}-x\rangle}^2
    =\frac{\norm{y}_2^2}{d^2}
      \sum_{i=1}^d \E[w_i^2e_i^2].
\]
There are $d$ summands and a prefactor $d^{-2}$, so a dimension-free
$O(B^{-2})$ bound for each summand yields the desired factor $1/d$.

The remaining issue is therefore a weighted scalar estimate.  For a fixed
coordinate, $(z_i,w_i)$ has the same joint law as
$X=\sum_j a_j\eps_j$ and $Y=\sum_j c_j\eps_j$, where the two sums use the same
Rademacher signs, $\norm{a}_2=\norm{c}_2=1$,
$a_j=\sqrt d H_{ij}x_j$, and
$c_j=\sqrt d H_{ij}y_j/\norm{y}_2$.  The weight $Y^2$ carries the query
direction and prevents us from applying the unweighted scalar bound directly.
On the central event, the variance estimate above produces the mixed moment
$\E[Y^2e^{X^2/3}]$.  On the complement, the global bound, Markov's inequality
applied to the tilted exponential-moment estimate, and layer-cake integration
give a uniform $o_B(1)B^{-2}$ contribution.
Together these estimates yield
\[
    \E\left[Y^2\bigl(\Q_U(X)-X\bigr)^2\right]
    \leq
    \left(
        \frac{\pi}{2}\E[Y^2e^{X^2/3}]+o_B(1)
    \right)B^{-2}.
\]
The sums $X$ and $Y$ are generally dependent because they use the same signs.
Nevertheless, a direct exponential-moment calculation for the underlying
Rademacher signs, followed by a Gaussian-mixture representation of
$e^{X^2/3}$, gives
\[
    \E[Y^2e^{\lambda X}]
    \leq(1+\lambda^2)e^{\lambda^2/2}
    \quad\Longrightarrow\quad
    \E[Y^2e^{X^2/3}]\leq3\sqrt{3}.
\]
Each diagonal term is therefore at most
$\left(3\pi\sqrt{3}/2+o_B(1)\right)B^{-2}$, and summing over the $d$
coordinates proves \cref{eq:main-mse}.

\section{Preliminaries}
\label{sec:preliminaries}

Throughout, $x\in\Sph^{d-1}$ is a fixed input and $y\in\R^d$ is a fixed
query.  For an integer bit width $b\geq2$, set $B=2^b$ and
$\delta=(B-1)^{-1}$.  Unless stated otherwise, expectations are over all
randomness in the construction.

\subsection{Randomized Hadamard transforms}

A normalized Hadamard matrix is a matrix $H\in\R^{d\times d}$ satisfying
$H_{ij}\in\{\pm d^{-1/2}\}$ and $H^\top H=I_d$.  The normalized
Walsh--Hadamard matrix provides such an $H$ whenever $d$ is a power of two.
Let
\[
    D=\diag(\eps_1,\dots,\eps_d),
\]
where $\eps_1,\dots,\eps_d$ are independent random signs.  We refer to $HD$ as
a randomized Hadamard transform.

An arbitrary ambient dimension $n$ can be handled by zero-padding the input to
$d=2^{\lceil\log_2 n\rceil}<2n$, applying the scheme in $\R^d$, and discarding
the final $d-n$ coordinates after decoding; the query is zero-padded for the
error analysis.  The resulting representation uses $db<2nb$ bits, and the
directional error bound holds with the padded dimension $d$.

\subsection{Gaussian-companded scalar quantization}

Let $\Phi$ denote the standard Gaussian distribution function.  Following
Feng et al.~\cite{feng2026provable}, define the compander
\[
    F(t)=\Phi\left(\frac{t}{\sqrt{3}}\right),
    \qquad
    F'(t)=\frac{1}{\sqrt{6\pi}}e^{-t^2/6}.
\]
The identity
\begin{equation}
    \frac{1}{12F'(t)^2}=\frac{\pi}{2}e^{t^2/3}
    \label{eq:F-identity}
\end{equation}
will determine the leading scalar error constant.

Let $G_B$ be the reconstruction map constructed by Feng et
al.~\cite[Lemma~A.8]{feng2026provable}:
\[
    G_B:\left[-\frac{\delta}{2},1+\frac{\delta}{2}\right]\to\R.
\]
It satisfies the averaging identity
\[
    \frac{1}{\delta}
    \int_{r-\delta/2}^{r+\delta/2}G_B(s)\dd s
    =F^{-1}(r),
    \qquad r\in(0,1).
\]
For $U\sim\Unif[0,1)$ and $t\in\R$, define the transmitted index and scalar
reconstruction by
\[
    J_U(t)=\lfloor(B-1)F(t)-U\rfloor+1,
    \qquad
    \Q_U(t)=G_B\left(\frac{J_U(t)+U-1/2}{B-1}\right).
\]
Since $F(t)\in(0,1)$, the index $J_U(t)$ lies in
$\{0,\dots,B-1\}$.  For fixed $t$, the argument of $G_B$ is distributed as
$F(t)+V\delta$, where $V\sim\Unif[-1/2,1/2]$.  The averaging identity therefore
gives
\begin{equation}
    \E_U[\Q_U(t)]=t
    \qquad(t\in\R).
    \label{eq:scalar-unbiased}
\end{equation}

For this reconstruction rule, the following estimates are direct consequences
of Lemma~A.8 and the proof of Lemma~A.9 in \cite{feng2026provable}.

\begin{proposition}[Scalar error bounds]
\label[proposition]{prop:scalar}
Let
\[
    \psi_B(t)=\E_U\bigl[(\Q_U(t)-t)^2\bigr],
    \qquad
    M_B=\sqrt{5\log B}.
\]
Then the following bounds hold.
\begin{enumerate}
    \item Uniformly for $\abs{t}\leq M_B$,
    \begin{equation}
        \psi_B(t)
        \leq
        (1+o_B(1))\frac{\delta^2}{12F'(t)^2}
        =(1+o_B(1))\frac{\pi\delta^2}{2}e^{t^2/3}.
        \label{eq:scalar-central}
    \end{equation}

    \item There is an absolute constant $C$ such that, for every $t\in\R$,
    \begin{equation}
        \psi_B(t)
        \leq
        C\left(1+t^2+\frac{\delta^2}{F'(t)^2}\right)
        \leq C\left(1+t^2+\delta^2e^{t^2/3}\right).
        \label{eq:scalar-global}
    \end{equation}
\end{enumerate}
\end{proposition}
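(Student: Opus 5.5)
The proposition concerns only the scalar quantizer $\Q_U$, so the plan is to rewrite $\psi_B(t)$ as an integral over the decoder argument and then import the two estimates from Lemma~A.8 and the proof of Lemma~A.9 in \cite{feng2026provable}. For fixed $t$, the argument of $G_B$ is $F(t)+V\delta$ with $V\sim\Unif[-1/2,1/2]$. Setting $r=F(t)$, so that $t=F^{-1}(r)$ and the averaging identity gives $\E_U[\Q_U(t)]=t$, we obtain
\[
    \psi_B(t)=\frac{1}{\delta}\int_{r-\delta/2}^{r+\delta/2}\bigl(G_B(s)-F^{-1}(r)\bigr)^2\dd s,
\]
which is the local variance of $G_B$ over a window of width $\delta$ centred at $r$.

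\textbf{Central bound.} The plan is to use the structure of $G_B$ from Lemma~A.8: away from the endpoints, $G_B$ is a small perturbation of $F^{-1}$. Taylor expansion about $r$ gives $G_B(s)\approx F^{-1}(r)+(s-r)/F'(t)$. The variance of a uniform variable on a window of width $\delta$ is $\delta^2/12$, so the main term is $\delta^2/(12F'(t)^2)$.

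To make this uniform over $\abs{t}\le M_B$, two error terms need control.
\begin{enumerate}
    \item The curvature term. Its relative size is roughly $\delta\,\abs{(F^{-1})''}/(F^{-1})'$, evaluated near $r$. Since $F'(t)\propto e^{-t^2/6}$, this is of order $\delta\abs{t}e^{t^2/6}\le\delta M_B B^{5/6}$. This is $o(1)$ because $\delta\approx B^{-1}$.
    \item The discrepancy $G_B-F^{-1}$ coming from the construction of Lemma~A.8. This must be shown to be $o(\delta/F'(t))$ on the window.
\end{enumerate}
Combining the main term with these errors and applying \eqref{eq:F-identity} yields \eqref{eq:scalar-central}.

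\textbf{Main obstacle.} I expect the hardest step to be checking that the window $[r-\delta/2,r+\delta/2]$ stays inside the region where Lemma~A.8 controls $G_B$, uniformly up to $\abs{t}=M_B$. At $\abs{t}=M_B$ we have $\min(r,1-r)\approx B^{-5/6}$, which still exceeds $\delta$ by a factor of about $B^{1/6}$. This margin is exactly why the threshold $\sqrt{5\log B}$ is chosen, and I would follow the corresponding estimates in the proof of Lemma~A.9 to make the check precise.

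\textbf{Global bound.} For \eqref{eq:scalar-global} I would use the crude decomposition $(\Q-t)^2\le 2\Q^2+2t^2$, together with Lemma~A.8's bound on $\abs{G_B}$ over a window. There are two cases.
\begin{enumerate}
    \item If the window is interior, the local Lipschitz bound $\abs{G_B'}\lesssim 1/F'$ on the window gives $\psi_B(t)\lesssim \delta^2/F'(t)^2$ plus a bounded term.
    \item If the window touches an endpoint, then $\abs{G_B}$ is at most $O(1+\abs{t})$ there.
\end{enumerate}
In both cases $\psi_B(t)\le C(1+t^2+\delta^2/F'(t)^2)$. The second inequality in \eqref{eq:scalar-global} then follows from $1/F'(t)^2=6\pi e^{t^2/3}$, after absorbing $6\pi$ into $C$.
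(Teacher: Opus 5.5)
Your plan is the same as the paper's. The paper gives no independent argument for this proposition; it imports both bounds from Lemma~A.8 and the proof of Lemma~A.9 of Feng et al.~\cite{feng2026provable}. Your rewriting of $\psi_B(t)$ as the variance of $G_B$ over the window $[r-\delta/2,r+\delta/2]$, $r=F(t)$, and your Taylor heuristic for $\delta^2/(12F'(t)^2)$ are consistent with that.

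One step in your sketch of the global bound is false, though. In the edge case you claim $\abs{G_B}=O(1+\abs{t})$ on the window, and no map satisfying the averaging identity can have this property. For $0<r<\delta$ the point $\delta/2$ is interior to the window, so $G_B$ is integrable near $\delta/2$. But the window average $F^{-1}(r)$ tends to $-\infty$ as $r\downarrow0$. Hence $G_B$ has a non-integrable singularity at $-\delta/2$, and symmetrically at $1+\delta/2$. Indeed, suppose $\abs{G_B}\le C(1+\abs{F^{-1}(r)})$ held on $[r-\delta/2,r+\delta/2]$ for all small $r$. Fix a small $\rho>0$ and cover $(-\delta/2,\rho-\delta/2]$ by the pieces $[2^{-k-1}\rho-\delta/2,\,2^{-k}\rho-\delta/2]$. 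Each piece lies in the window with $r=2^{-k-1}\rho$. Using $\abs{F^{-1}(u)}=O(\sqrt{\log(1/u)})$, the bounds on the pieces are summable, so $G_B$ would be integrable at $-\delta/2$, a contradiction.

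Quantitatively, differentiating the averaging identity gives $G_B(s)=G_B(s+\delta)-\delta\,(F^{-1})'(s+\delta/2)$. So, as long as $G_B$ is moderate near $\delta/2$, we have $G_B(s)\approx-\delta\,(F^{-1})'(s+\delta/2)$ near $-\delta/2$. At the left end of the window this has size $\delta/F'(t)$. Its contribution to $\frac1\delta\int_{r-\delta/2}^{r+\delta/2}G_B(s)^2\dd s$ is of order $(r/\delta)\,\delta^2/F'(t)^2$, which exceeds $1+t^2$ once $r\ll\delta/t^4$.

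So the edge case is precisely where the $\delta^2/F'(t)^2$ term of \eqref{eq:scalar-global} is needed. What you must take from Lemma~A.8 there is a mean-square bound $\frac1\delta\int_{r-\delta/2}^{r+\delta/2}G_B(s)^2\dd s\le C(1+t^2+\delta^2/F'(t)^2)$, not a pointwise $O(1+\abs{t})$ bound. With that replacement your conclusion $\psi_B(t)\le C(1+t^2+\delta^2/F'(t)^2)$ stands.

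A smaller point: the margin $\min(r,1-r)\gg\delta$ only requires $M_B^2$ to stay below $6\log B$. The specific value $5\log B$ is also dictated by the tail estimate in the proof of \cref{lem:weighted-scalar}, which needs $M_B^2>4\log B$.
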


\section{The quantizer and its error decomposition}
\label{sec:quantizer}

We now define the single-stage quantizer and derive the conditional moment
identity that reduces its analysis to a weighted estimate for Rademacher sums.

\subsection{Pairwise-independent dithers}

We retain the scalar quantization rule from \cref{sec:preliminaries}, but replace
the single dither shared across Hadamard coordinates by a pairwise-independent
family.  Two independent uniform random variables suffice to generate it.

\begin{lemma}[Affine dither construction]
\label[lemma]{lem:pairwise-dither}
Let $V_0,V_1$ be independent random variables, each uniform on $[0,1)$ and
independent of $D$.  For $i=1,\dots,d$, define
\begin{equation}
    U_i=(V_0+iV_1)\bmod 1,
    \label{eq:affine-dither}
\end{equation}
where $a\bmod1$ denotes the fractional part of $a$.
Each $U_i$ is uniform on $[0,1)$, and $U_i$ and $U_j$ are independent whenever
$i\neq j$.
\end{lemma}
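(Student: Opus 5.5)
We will view $V_0,V_1$ as independent uniform elements of the circle group $\mathbb{T}=\R/\mathbb{Z}$, whose normalized Haar measure is exactly the law of a $\Unif[0,1)$ variable read mod $1$. The two claims then reduce to translation invariance of Haar measure and a short conditioning argument.

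\emph{Marginal uniformity.} Fix $i$ and condition on $V_1=v$. Then $U_i=(V_0+iv)\bmod 1$ is a rotation of $V_0$ by the fixed amount $iv$. Since $V_0$ is independent of $V_1$, its conditional law is still uniform, and rotation preserves Lebesgue measure on $\mathbb{T}$. Hence $U_i$ is conditionally uniform given $V_1$, and therefore uniform unconditionally.

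\emph{Pairwise independence.} Fix $i\neq j$. We will show that $(U_i,U_j)$ is uniform on $\mathbb{T}^2$, which is equivalent to independence with uniform marginals. The plan is to pass to the pair $(U_i,\,U_j-U_i)$, since
\[
U_j-U_i\equiv (j-i)V_1 \pmod 1.
\]
\begin{enumerate}
\item Let $k=j-i\neq 0$. The map $v\mapsto kv \bmod 1$ pushes Lebesgue measure on $\mathbb{T}$ to Lebesgue measure. This holds because it is a $|k|$-to-one covering map whose local inverses each scale length by $1/|k|$; equivalently, for an interval $A$ the preimage consists of $|k|$ intervals of length $|A|/|k|$.
\item Conditionally on $V_1=v$, the variable $U_i$ is uniform, by the rotation argument above, regardless of $v$. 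So $U_i$ is independent of $V_1$, and hence independent of every function of $V_1$, in particular of $W=U_j-U_i$.
\item By the first step, $W$ is uniform. So $(U_i,W)$ is uniform on $\mathbb{T}^2$. The map $(u,w)\mapsto(u,u+w)$ is a measure-preserving automorphism of $\mathbb{T}^2$, so $(U_i,U_j)$ is also uniform on $\mathbb{T}^2$.
\end{enumerate}
Uniformity on $\mathbb{T}^2$ gives $\Pr(U_i\in A,U_j\in A')=|A||A'|$, which is exactly the claimed independence.

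An alternative check uses Fourier characters. For integers $(m,n)\neq(0,0)$,
\[
\E\bigl[e^{2\pi\mathrm{i}(mU_i+nU_j)}\bigr]=\E\bigl[e^{2\pi\mathrm{i}(m+n)V_0}\bigr]\,\E\bigl[e^{2\pi\mathrm{i}(mi+nj)V_1}\bigr].
\]
This vanishes unless both $m+n=0$ and $mi+nj=0$. Those two conditions force $m(i-j)=0$, hence $m=n=0$. All nontrivial Fourier coefficients therefore vanish, which is uniformity on $\mathbb{T}^2$.

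The only point needing care is that the statement $kV_1 \bmod 1$ is uniform uses $k\neq 0$, i.e.\ $i\neq j$. For $i=j$ the pair is degenerate, which is why only pairwise and not full independence holds. Independence from $D$ is immediate because $(U_i)$ is a function of $(V_0,V_1)$ alone.
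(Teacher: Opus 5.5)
Your proof is correct and follows the paper's argument: condition on $V_1$ to see that $U_i$ is uniform and independent of $V_1$, show that $W=(j-i)V_1 \bmod 1$ is uniform via the $|j-i|$-to-one covering, and use $U_j=(U_i+W)\bmod 1$. The paper concludes by conditioning on $U_i$ where you use the measure-preserving shear on $\mathbb{T}^2$, which is the same step. Your Fourier-character computation is a valid additional check that the paper does not include.
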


\begin{proof}
Conditional on $V_1=v$, the variable $U_i$ is a modulo-one translate of $V_0$
and is therefore uniform on $[0,1)$.  Since this conditional distribution does
not depend on $v$, the variables $U_i$ and $V_1$ are independent.

Now fix $i\neq j$ and set $W=((j-i)V_1)\bmod1$.  The variable $W$ is uniform
on $[0,1)$: after partitioning $[0,1)$ into $\abs{j-i}$ equal intervals, the
map $v\mapsto((j-i)v)\bmod1$ covers $[0,1)$ once on each interval, with the
orientation reversed when $j<i$.  Since $U_i$ is independent of $V_1$, it is
also independent of $W$.  Moreover,
\[
    U_j=(U_i+W)\bmod1.
\]
Conditioning on $U_i$ leaves $W$ uniform, so the right-hand side is conditionally
uniform on $[0,1)$.  Its distribution does not depend on $U_i$, proving that
$U_i$ and $U_j$ are independent.
\end{proof}

The dither sequence can be generated in $O(d)$ time: after computing
$U_1=(V_0+V_1)\bmod1$, use $U_{i+1}=(U_i+V_1)\bmod1$.

\subsection{Single-stage encoder and decoder}

Applying the scalar rule coordinatewise with the dithers above gives
\cref{alg:quantizer}.  The scheme coincides with the base vector quantizer of
Feng et al.~\cite{feng2026provable}, except that its shared scalar dither is
replaced by the affine family in \cref{lem:pairwise-dither}.  All randomness
sampled in the algorithm is shared by the encoder and decoder.

\begin{algorithm}[H]
\caption{Pairwise-independent single-stage Hadamard quantizer}
\label{alg:quantizer}
\begin{algorithmic}[1]
\Require Dimension $d$, integer bit width $b\geq2$, normalized Hadamard matrix
$H\in\R^{d\times d}$
\State Set $B\gets2^b$.
\State Sample independent Rademacher signs $\eps_1,\dots,\eps_d$ and set
$D\gets\diag(\eps_1,\dots,\eps_d)$.
\State Sample independent $V_0,V_1\sim\Unif[0,1)$, independently of $D$.
\For{$i=1,\dots,d$}
    \State $U_i\gets(V_0+iV_1)\bmod1$.
\EndFor
\Statex
\Procedure{$\enc$}{$x\in\Sph^{d-1}$}
    \State $z\gets\sqrt{d}\,HDx$.
    \For{$i=1,\dots,d$}
        \State $j_i\gets\lfloor(B-1)F(z_i)-U_i\rfloor+1$.
    \EndFor
    \State \Return $j=(j_1,\dots,j_d)\in\{0,\dots,B-1\}^d$.
\EndProcedure
\Statex
\Procedure{$\dec$}{$j$}
    \For{$i=1,\dots,d$}
        \State $q_i\gets G_B\bigl((j_i+U_i-1/2)/(B-1)\bigr)$.
    \EndFor
    \State $\widehat{x}\gets DH^\top(q/\sqrt{d})$.
    \State \Return $\widehat{x}$.
\EndProcedure
\end{algorithmic}
\end{algorithm}

Since each $j_i$ belongs to an alphabet of size $B=2^b$, fixed-length encoding
uses exactly $b$ bits per coordinate.  The transmitted message therefore has
exactly $db$ bits; the shared randomness is not counted.

\subsection{Conditional first and second moments}
\label{sec:diagonalization}

Define the transformed input and scalar errors by
\begin{equation}
    z_i=\sqrt{d}(HDx)_i,
    \qquad
    e_i=\Q_{U_i}(z_i)-z_i.
    \label{eq:z-e-definition}
\end{equation}
For $y\neq0$, define the normalized transformed query coefficients
\begin{equation}
    w_i=\frac{\sqrt{d}(HDy)_i}{\norm{y}_2}.
    \label{eq:w-definition}
\end{equation}

\begin{proposition}[Conditional unbiasedness]
\label[proposition]{prop:conditional-unbiasedness}
For every fixed $x\in\Sph^{d-1}$,
\[
    \E[\widehat{x}\mid D]=x.
\]
\end{proposition}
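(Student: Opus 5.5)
We condition on $D$ and use linearity of the decoder together with the scalar averaging identity \cref{eq:scalar-unbiased}. Once $D$ is fixed, the vector $z=\sqrt{d}\,HDx$ is deterministic. Each reconstructed coordinate $q_i=\Q_{U_i}(z_i)$ depends on the remaining randomness only through $U_i$. Since $D$ is independent of $(V_0,V_1)$, the conditional law of $U_i$ given $D$ is still uniform on $[0,1)$ by \cref{lem:pairwise-dither}. Only the marginal uniformity of each $U_i$ is needed here; pairwise independence plays no role in this first-moment statement.

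The steps are as follows. First, write
\[
\widehat{x}=DH^\top(q/\sqrt{d}),
\]
and note that $DH^\top$ is a deterministic linear map given $D$. Hence
\[
\E[\widehat{x}\mid D]=DH^\top\bigl(\E[q\mid D]/\sqrt{d}\bigr).
\]
Second, compute coordinatewise $\E[q_i\mid D]=\E_{U}[\Q_U(z_i)]$ with $U\sim\Unif[0,1)$ and $z_i$ held fixed. By \cref{eq:scalar-unbiased} this equals $z_i$.

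Justifying this interchange requires integrability of $\Q_U(z_i)$. This follows because $\psi_B(z_i)<\infty$ by \cref{eq:scalar-global}. Alternatively, $G_B$ is integrable on its compact domain, which is implicit in the averaging identity. This integrability point is the only mildly delicate step.

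Third, substitute to get
\[
\E[\widehat{x}\mid D]=DH^\top(\sqrt{d}\,HDx/\sqrt{d})=DH^\top HDx.
\]
Then use $H^\top H=I_d$ and $D^2=I_d$ to conclude that this equals $x$.

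No real obstacle is expected. The one point to state carefully is that conditioning on $D$ does not alter the distribution of $U_i$, which holds because of the independence of $D$ and $(V_0,V_1)$. Taking a further expectation then also yields \cref{eq:main-unbiased}.
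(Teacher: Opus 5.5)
Your proposal is correct and follows the paper's own argument: the decoder is linear given $D$, coordinatewise scalar unbiasedness gives $\E[q\mid D]=\sqrt{d}\,HDx$, and then $DH^\top HDx=x$. Your additional remarks, that $U_i$ stays uniform given $D$ by independence and that integrability follows from $\psi_B(z_i)<\infty$, are valid details the paper leaves implicit.
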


\begin{proof}
Coordinatewise scalar unbiasedness gives
$\E[q\mid D]=z=\sqrt{d}HDx$.  Hence
\[
    \E[\widehat{x}\mid D]
    =DH^\top\frac{\E[q\mid D]}{\sqrt{d}}
    =DH^\top HDx=x.
\]
\end{proof}

\begin{proposition}[Conditional second moment]
\label[proposition]{prop:conditional-second-moment}
Conditioned on $D$, the errors $e_1,\dots,e_d$ are centered and pairwise
independent.  Consequently, for every fixed $y\neq0$,
\begin{equation}
    \E\left[
        \abs{\langle y,\widehat{x}-x\rangle}^{2}
        \,\middle|\,D
    \right]
    =\frac{\norm{y}_2^2}{d^2}
      \sum_{i=1}^d w_i^2\psi_B(z_i).
    \label{eq:conditional-diagonalization}
\end{equation}
\end{proposition}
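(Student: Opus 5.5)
The plan is to condition on $D$ throughout and use that $D$ is independent of $(V_0,V_1)$. Then the vector $z=\sqrt{d}HDx$ and the weights $w$ are deterministic, and each error $e_i=\Q_{U_i}(z_i)-z_i$ is a fixed measurable function of $U_i$ alone.

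\textbf{Step 1: centering and pairwise independence.} By \cref{lem:pairwise-dither}, conditionally on $D$ the dithers $U_1,\dots,U_d$ are still uniform on $[0,1)$ and pairwise independent. Since $e_i=g_i(U_i)$ with $g_i(u)=\Q_u(z_i)-z_i$, the errors inherit pairwise independence. The scalar identity \cref{eq:scalar-unbiased} gives $\E[e_i\mid D]=0$, and by definition $\E[e_i^2\mid D]=\psi_B(z_i)$.

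\textbf{Step 2: algebraic reduction.} From the decoder, $\widehat{x}=DH^\top q/\sqrt{d}$. Since $x=DH^\top z/\sqrt{d}$, we get $\widehat{x}-x=DH^\top e/\sqrt{d}$. Therefore
\[
\langle y,\widehat{x}-x\rangle=\frac{1}{\sqrt d}\langle HDy,e\rangle=\frac{\norm{y}_2}{d}\sum_{i=1}^d w_ie_i,
\]
using $D^\top=D$ and the definition \cref{eq:w-definition}. Squaring and taking the conditional expectation gives
\[
\frac{\norm{y}_2^2}{d^2}\sum_{i,j}w_iw_j\E[e_ie_j\mid D].
\]
For $i\neq j$, pairwise independence and centering give $\E[e_ie_j\mid D]=\E[e_i\mid D]\,\E[e_j\mid D]=0$. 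The diagonal terms give $w_i^2\psi_B(z_i)$, which is exactly \cref{eq:conditional-diagonalization}.

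\textbf{Main obstacle: integrability.} The factorization $\E[e_ie_j\mid D]=\E[e_i\mid D]\,\E[e_j\mid D]$ is only valid if each $e_i$ is square integrable given $D$. This follows from the global bound \cref{eq:scalar-global}, since $\psi_B(z_i)<\infty$ for each finite $z_i$. Cauchy--Schwarz then makes every cross term integrable, so expanding the square and exchanging the finite sum with the expectation is legitimate.

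A minor point is measurability of $u\mapsto\Q_u(t)$. It holds because $J_U$ is a floor of an affine function of $u$ and $G_B$ is measurable, being integrable by its averaging identity.
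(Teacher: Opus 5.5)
Your proposal is correct and follows the paper's proof essentially step for step: condition on $D$, write $\langle y,\widehat{x}-x\rangle=\frac{\norm{y}_2}{d}\sum_i w_ie_i$, and eliminate the cross terms using centering from the scalar averaging identity together with the pairwise independence of the $U_i$ from \cref{lem:pairwise-dither}. Your integrability and measurability remarks are harmless additions that the paper leaves implicit. Strictly speaking, the factorization $\E[e_ie_j\mid D]=\E[e_i\mid D]\,\E[e_j\mid D]$ for independent variables needs only $e_i,e_j\in L^1$; square integrability is what makes the expanded second moment finite.
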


\begin{proof}
The decoder gives
\[
    \widehat{x}-x
    =DH^\top\left(\frac{q-z}{\sqrt{d}}\right)
    =DH^\top\left(\frac{e}{\sqrt{d}}\right),
\]
and therefore
\begin{equation}
    \langle y,\widehat{x}-x\rangle
    =\frac{\norm{y}_2}{d}\sum_{i=1}^d w_i e_i.
    \label{eq:error-representation}
\end{equation}
Once $D$ is fixed, every $z_i$ and $w_i$ is deterministic.
The error $e_i$ depends only on $U_i$.
By \cref{lem:pairwise-dither}, the errors are therefore pairwise independent
under the remaining randomness.
Moreover, \cref{eq:scalar-unbiased} gives
\[
    \E[e_i\mid D]=0.
\]
For $i\neq j$, it follows that
\[
    \E[e_ie_j\mid D]
    =\E[e_i\mid D]\E[e_j\mid D]=0
    \qquad(i\neq j).
\]
Since $\E[e_i^2\mid D]=\psi_B(z_i)$, expanding the square in
\cref{eq:error-representation} proves \cref{eq:conditional-diagonalization}.
\end{proof}

We next express the diagonal terms in \cref{eq:conditional-diagonalization} as
weighted errors of Rademacher sums.  Fix $y\neq0$ and a row $i$ of $H$, and set
\[
    \eta_j=\sqrt{d}H_{ij}\eps_j,
    \qquad j=1,\dots,d.
\]
These are independent Rademacher signs, and
\begin{equation}
    z_i=\sum_{j=1}^d x_j\eta_j,
    \qquad
    w_i=\sum_{j=1}^d\frac{y_j}{\norm{y}_2}\eta_j.
    \label{eq:row-rademacher}
\end{equation}
Both coefficient vectors in \cref{eq:row-rademacher} have Euclidean norm one.
Averaging \cref{eq:conditional-diagonalization} over $D$ therefore gives
\begin{equation}
    \E\abs{\langle y,\widehat{x}-x\rangle}^{2}
    =\frac{\norm{y}_2^2}{d^2}
      \sum_{i=1}^d\E_D[w_i^2\psi_B(z_i)],
    \label{eq:rademacher-reduction}
\end{equation}
where every summand has the form $\E[Y^2\psi_B(X)]$ for two normalized
Rademacher sums built from the same signs.

\section{Scalar error for Rademacher sums}
\label{sec:rademacher}

Throughout this section, $\eta_1,\dots,\eta_d$ are independent Rademacher
random variables,
\[
    X=\sum_{j=1}^d a_j\eta_j,
    \qquad
    Y=\sum_{j=1}^d c_j\eta_j,
    \qquad
    \norm{a}_2=\norm{c}_2=1,
\]
and $U\sim\Unif[0,1)$ is independent of the signs.  The reduction in
\cref{eq:rademacher-reduction} leads to the following estimate.

\begin{lemma}[Weighted scalar estimate]
\label[lemma]{lem:weighted-scalar}
Uniformly over the dimension and all unit coefficient vectors,
\begin{equation}
    \E\left[Y^2(\Q_U(X)-X)^2\right]
    \leq
    \left(
        \frac{\pi}{2}\E[Y^2e^{X^2/3}]+o_B(1)
    \right)B^{-2}.
    \label{eq:weighted-scalar}
\end{equation}
Consequently,
\begin{equation}
    \E\left[Y^2(\Q_U(X)-X)^2\right]
    \leq
    \left(\frac{3\pi\sqrt{3}}{2}+o_B(1)\right)B^{-2}.
    \label{eq:weighted-scalar-consequence}
\end{equation}
\end{lemma}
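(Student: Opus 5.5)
We split the expectation according to the central event $E=\{\abs{X}\le M_B\}$, with $M_B=\sqrt{5\log B}$ as in \cref{prop:scalar}. Conditioning on the signs and averaging over $U$ first, the left side of \cref{eq:weighted-scalar} is $\E[Y^2\psi_B(X)]$. On $E$ we apply \cref{eq:scalar-central}:
\[
\E[Y^2\psi_B(X)\mathbf 1_E]\le(1+o_B(1))\frac{\pi\delta^2}{2}\E[Y^2e^{X^2/3}].
\]
We then replace $\delta^2=(B-1)^{-2}$ by $(1+o_B(1))B^{-2}$. The $o_B(1)$ factor multiplies a quantity bounded by the absolute constant $3\sqrt3$ (see below), so it is absorbed into the additive $o_B(1)$.

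On $E^c$ we use \cref{eq:scalar-global}. This gives
\[
\E[Y^2\psi_B(X)\mathbf 1_{E^c}]\le C\,\E\bigl[Y^2(1+X^2+\delta^2e^{X^2/3})\mathbf 1_{E^c}\bigr],
\]
and each piece must be shown to be $o_B(1)B^{-2}$ uniformly.

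The key tool is the tilted moment bound $\E[Y^2e^{\lambda X}]\le(1+\lambda^2)e^{\lambda^2/2}$. To prove it, expand $Y^2=\sum_{j,k}c_jc_k\eta_j\eta_k$ and use independence: $\E[\eta_je^{\lambda a_j\eta_j}]=\tanh(\lambda a_j)\cosh(\lambda a_j)$. This yields
\[
\E[Y^2e^{\lambda X}]=\prod_l\cosh(\lambda a_l)\Bigl(1+\bigl(\textstyle\sum_j c_j\tanh(\lambda a_j)\bigr)^2-\sum_j c_j^2\tanh^2(\lambda a_j)\Bigr).
\]
Now bound $\prod_l\cosh(\lambda a_l)\le e^{\lambda^2/2}$ and, by Cauchy--Schwarz, $\bigl(\sum_j c_j\tanh(\lambda a_j)\bigr)^2\le\sum_j\tanh^2(\lambda a_j)\le\lambda^2$.

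Next, write $e^{sX^2}=\E_g[e^{\sqrt{2s}\,gX}]$ with $g$ standard Gaussian. For $s<1/2$ this gives
\[
\E[Y^2e^{sX^2}]\le\E_g\bigl[(1+2sg^2)e^{sg^2}\bigr]=(1-2s)^{-1/2}+2s(1-2s)^{-3/2}=(1-2s)^{-3/2}.
\]
At $s=1/3$ this equals $3\sqrt3$, which is the second claim \cref{eq:weighted-scalar-consequence} once the first is established.

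For the tail pieces, pick $s\in(1/3,1/2)$, say $s=0.45$, so that $\E[Y^2e^{sX^2}]\le K$ is an absolute constant. On $E^c$ we have $e^{X^2/3}\le e^{sX^2}e^{-(s-1/3)M_B^2}$, so
\[
\delta^2\,\E[Y^2e^{X^2/3}\mathbf 1_{E^c}]\le\delta^2K B^{-5(s-1/3)}=o(B^{-2}).
\]
Similarly, $Y^2(1+X^2)\mathbf 1_{E^c}\le Y^2e^{sX^2}(1+X^2)e^{-sX^2}\mathbf 1_{E^c}$. Since $(1+X^2)e^{-sX^2}$ is decreasing for large $\abs X$, this is at most $K(1+M_B^2)B^{-5s}$, and $5s>2$ makes it $o(B^{-2})$.

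Uniformity in $d$, $a$, $c$ holds because every bound depends only on $\norm a_2=\norm c_2=1$. The layer-cake/Markov route mentioned in the overview is unnecessary with this pointwise tilt.

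I expect the main obstacle to be the tilted moment computation. The factor $Y^2$ correlates with $X$, and it is not obvious that the resulting cross terms are controlled with the sharp constant $(1+\lambda^2)$. The exact product formula above, together with Cauchy--Schwarz on the $\tanh$ sum, is what I would try first.
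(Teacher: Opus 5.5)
Your proof is correct. Its architecture matches the paper's:
\begin{itemize}
\item the same split at $M_B=\sqrt{5\log B}$;
\item the same central estimate, made uniform by $\E[Y^2e^{X^2/3}]\le 3\sqrt3$;
\item the same $\tanh$/Cauchy--Schwarz derivation of $\E[Y^2e^{\lambda X}]\le(1+\lambda^2)e^{\lambda^2/2}$.
\end{itemize}

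You diverge only in the tail. The paper applies Markov's inequality to the tilted bound to obtain the weighted tail function $T(s)=\E[Y^2\mathbf{1}_{\{\abs{X}>s\}}]\le 2(1+s^2)e^{-s^2/2}$. It then integrates $T$ (layer-cake) against $s^2$ and $e^{s^2/3}$, which gives Gaussian-rate bounds $(1+M^4)e^{-M^2/2}$ and $(1+M^4)e^{-M^2/6}$.

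You instead run the Gaussian-mixture step with a general exponent and obtain the closed form $\E[Y^2e^{sX^2}]\le(1-2s)^{-3/2}$ for $s<1/2$; at $s=1/3$ this also gives $3\sqrt3$. You then dominate the tail integrands pointwise on $\{\abs{X}>M_B\}$: by $e^{sX^2}$ times $(1+M_B^2)e^{-sM_B^2}$, or by $e^{sX^2}$ times $e^{-(s-1/3)M_B^2}$. The monotonicity you invoke is valid, since $M_B^2\ge 5\log 4>1/s-1$.

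Your route is shorter and reuses a single computation, but it pays slightly in rate. Keeping $K=(1-2s)^{-3/2}$ finite forces $s<1/2$. So your polynomial tail piece is $O(\log B\cdot B^{-5s})$ rather than $O(\log^2 B\cdot B^{-5/2})$, and you need $5s>2$. The window $s\in(2/5,1/2)$ is nonempty only because $M_B^2=5\log B$ exceeds $4\log B$ by a constant factor. With a fixed $s$, your argument therefore has less slack in the threshold than the paper's Gaussian-rate tail. For the threshold actually used this is immaterial: with, e.g., $s=0.45$, all your bounds are valid and uniform in $d$, $a$, $c$.
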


We first establish the exponential moment bounds used in its proof.

\subsection{Exponential moment bounds}

\begin{lemma}[Exponential moment bounds]
\label[lemma]{lem:exponential-moments}
For every $\lambda\in\R$,
\begin{equation}
    \E\left[Y^2e^{\lambda X}\right]
    \leq(1+\lambda^2)e^{\lambda^2/2}.
    \label{eq:tilted-second-moment}
\end{equation}
Moreover,
\begin{equation}
    \E\left[Y^2e^{X^2/3}\right]\leq3\sqrt{3},
    \label{eq:mixed-exponential}
\end{equation}
and every normalized Rademacher sum $X$ satisfies
\begin{equation}
    \E e^{X^2/3}\leq\sqrt{3}.
    \label{eq:unweighted-exponential}
\end{equation}
\end{lemma}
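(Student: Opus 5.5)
We first prove the tilted bound \cref{eq:tilted-second-moment} by a direct computation over the independent signs. Expand $Y^2=\sum_j c_j^2+\sum_{j\neq k}c_jc_k\eta_j\eta_k$, using $\eta_j^2=1$. Independence then factorizes each term:
\[
    \E[Y^2e^{\lambda X}]
    =\prod_{l}\cosh(\lambda a_l)\Bigl(\sum_j c_j^2+\sum_{j\neq k}c_jc_k\tanh(\lambda a_j)\tanh(\lambda a_k)\Bigr).
\]
Indeed, $\E[\eta_je^{\lambda a_j\eta_j}]=\sinh(\lambda a_j)=\cosh(\lambda a_j)\tanh(\lambda a_j)$. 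Write $\tau_j=\tanh(\lambda a_j)$. The bracket equals
\[
    \sum_j c_j^2(1-\tau_j^2)+\Bigl(\sum_j c_j\tau_j\Bigr)^2
    \leq 1+\norm{c}_2^2\norm{\tau}_2^2
    \leq 1+\lambda^2\norm{a}_2^2=1+\lambda^2,
\]
where the last step uses Cauchy--Schwarz, $\abs{\tanh u}\leq\abs{u}$ and $\tau_j^2\geq0$. For the prefactor, the standard inequality $\cosh u\leq e^{u^2/2}$ gives $\prod_l\cosh(\lambda a_l)\leq e^{\lambda^2/2}$. This proves \cref{eq:tilted-second-moment}.

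For \cref{eq:mixed-exponential}, we use the Gaussian-mixture identity $e^{X^2/3}=\E_g[e^{\sqrt{2/3}\,gX}]$ with $g\sim N(0,1)$ independent of the signs. This holds because $\E e^{sg}=e^{s^2/2}$, and with $s^2=\tfrac{2}{3}X^2$ we get $s^2/2=X^2/3$. By Tonelli and \cref{eq:tilted-second-moment} with $\lambda=\sqrt{2/3}\,g$,
\[
    \E[Y^2e^{X^2/3}]\leq\E_g\bigl[(1+\tfrac23g^2)e^{g^2/3}\bigr].
\]
Since $\E e^{\alpha g^2}=(1-2\alpha)^{-1/2}$ and $\E[g^2e^{\alpha g^2}]=(1-2\alpha)^{-3/2}$, taking $\alpha=1/3$ gives $\sqrt3+\tfrac23\cdot3\sqrt3=3\sqrt3$.

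The unweighted bound \cref{eq:unweighted-exponential} follows in the same way. Dropping $Y^2$, the bracket is replaced by $1$, so $\E e^{\lambda X}\leq e^{\lambda^2/2}$ and hence $\E e^{X^2/3}\leq\E_g e^{g^2/3}=\sqrt3$. Alternatively, one can take $c=a$ and note that averaging over a rotation is unnecessary.

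The only delicate step is the algebra in the first part: one must rewrite the bracket as a diagonal correction plus a perfect square and check that $1-\tau_j^2\leq1$ and the square term together stay bounded by $1+\lambda^2$. Everything else is Gaussian integration. The integrals converge precisely because $2\alpha=2/3<1$, which explains why the constant $1/3$ in the compander is admissible.
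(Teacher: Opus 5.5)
Your proof is correct and follows the paper's argument essentially step for step: you use the same factorization $\E[Y^2e^{\lambda X}]=\prod_l\cosh(\lambda a_l)\bigl(\sum_j c_j^2(1-\tau_j^2)+(\sum_j c_j\tau_j)^2\bigr)$, bound it with Cauchy--Schwarz and $\abs{\tanh u}\leq\abs{u}$, and then apply the Gaussian-mixture identity, Tonelli, and the moments $\E e^{g^2/3}=\sqrt3$ and $\E[g^2e^{g^2/3}]=3\sqrt3$. The one slip is your aside that one could instead take $c=a$; that gives $Y=X$ and bounds $\E[X^2e^{X^2/3}]$, not $\E e^{X^2/3}$, so drop it in favor of the direct MGF argument you (and the paper) give first for \cref{eq:unweighted-exponential}.
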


\begin{proof}
The moment generating function of $X$ satisfies
\begin{equation}
    M(\lambda):=\E e^{\lambda X}
    =\prod_{j=1}^d\cosh(\lambda a_j)
    \leq e^{\lambda^2/2}.
    \label{eq:rademacher-mgf}
\end{equation}
Set $t_j=\tanh(\lambda a_j)$.  Independence of the signs and direct expansion
of $Y^2$ give
\begin{align*}
    \frac{\E[Y^2e^{\lambda X}]}{M(\lambda)}
    &={}
    \sum_j c_j^2(1-t_j^2)
      +\left(\sum_jc_jt_j\right)^2\\
    &\leq1+\sum_jt_j^2
    \leq1+\lambda^2,
\end{align*}
where we used Cauchy--Schwarz and $\abs{\tanh u}\leq\abs{u}$.
Together with \cref{eq:rademacher-mgf}, this proves
\cref{eq:tilted-second-moment}.

Let $g\sim\mathcal{N}(0,1)$ be independent of the Rademacher variables.  The Gaussian
moment generating function gives
\[
    e^{X^2/3}=\E_g e^{\sqrt{2/3}\,gX}.
\]
Tonelli's theorem and \cref{eq:tilted-second-moment}, applied conditionally on
$g$ with $\lambda=\sqrt{2/3}\,g$, yield
\begin{align*}
    \E\left[Y^2e^{X^2/3}\right]
    &={}
    \E_g\E\left[Y^2e^{\sqrt{2/3}\,gX}\right]\\
    &\leq
    \E_g\left[\left(1+\frac{2}{3}g^2\right)e^{g^2/3}\right].
\end{align*}
For a standard Gaussian,
\[
    \E e^{g^2/3}=\sqrt{3},
    \qquad
    \E[g^2e^{g^2/3}]=3\sqrt{3}.
\]
This proves \cref{eq:mixed-exponential}.  Finally,
\[
    \E e^{X^2/3}
    =\E_g\E e^{\sqrt{2/3}\,gX}
    \leq\E_g e^{g^2/3}
    =\sqrt{3}
\]
by \cref{eq:rademacher-mgf}, proving \cref{eq:unweighted-exponential}.
\end{proof}

\subsection{Proof of the weighted estimate}

\begin{proof}[Proof of \cref{lem:weighted-scalar}]
Recall
\[
    \psi_B(t)=\E_U[(\Q_U(t)-t)^2],
    \qquad
    M_B=\sqrt{5\log B}.
\]
We split $\E[Y^2\psi_B(X)]$ over the central event
$\{\abs{X}\leq M_B\}$ and its complement.

On the central event, \cref{eq:scalar-central,eq:F-identity} give
\begin{align}
    B^2\E\left[Y^2\psi_B(X)\mathbf{1}_{\{\abs{X}\leq M_B\}}\right]
    &\leq
    (1+o_B(1))\frac{\pi}{2}
    \E\left[Y^2e^{X^2/3}\mathbf{1}_{\{\abs{X}\leq M_B\}}\right]
    \notag\\
    &\leq
    (1+o_B(1))\frac{\pi}{2}\E[Y^2e^{X^2/3}],
    \label{eq:weighted-central}
\end{align}
where $B^2\delta^2=1+O(B^{-1})$.  The bound
\cref{eq:mixed-exponential} makes the resulting $o_B(1)$ uniform.

It remains to show that the tail contribution is $o(B^{-2})$ uniformly.
For $s>0$, Markov's inequality and \cref{eq:tilted-second-moment}, with
$\lambda=s$, imply
\[
    \E[Y^2\mathbf{1}_{\{X>s\}}]
    \leq e^{-s^2}\E[Y^2e^{sX}]
    \leq(1+s^2)e^{-s^2/2}.
\]
The same argument applied to $-X$ gives
\begin{equation}
    T(s):=\E[Y^2\mathbf{1}_{\{\abs{X}>s\}}]
    \leq2(1+s^2)e^{-s^2/2}.
    \label{eq:weighted-tail-function}
\end{equation}
Layer-cake integration gives, for $M>0$,
\begin{align}
    \E[Y^2X^2\mathbf{1}_{\{\abs{X}>M\}}]
    &=M^2T(M)+2\int_M^\infty sT(s)\dd s
    \notag\\
    &\leq C(1+M^4)e^{-M^2/2},
    \label{eq:weighted-poly-tail}
\end{align}
and
\begin{align}
    \E[Y^2e^{X^2/3}\mathbf{1}_{\{\abs{X}>M\}}]
    &=e^{M^2/3}T(M)
      +\int_M^\infty\frac{2s}{3}e^{s^2/3}T(s)\dd s
    \notag\\
    &\leq C(1+M^4)e^{-M^2/6}.
    \label{eq:weighted-exp-tail}
\end{align}
The constant $C$ is absolute in both displays.

Taking $M=M_B$ in \cref{eq:weighted-poly-tail}, combining it with
\cref{eq:weighted-tail-function}, and using $M_B^2=5\log B$, gives
\begin{equation}
    B^2\E\left[Y^2(1+X^2)
        \mathbf{1}_{\{\abs{X}>M_B\}}\right]=o_B(1).
    \label{eq:weighted-poly-tail-final}
\end{equation}
Similarly, \cref{eq:weighted-exp-tail} gives
\[
    \E\left[Y^2e^{X^2/3}\mathbf{1}_{\{\abs{X}>M_B\}}\right]
    =o_B(1),
\]
uniformly over the coefficient vectors and dimension.  The global scalar
estimate \cref{eq:scalar-global} therefore yields
\begin{align*}
    B^2\E\left[Y^2\psi_B(X)
        \mathbf{1}_{\{\abs{X}>M_B\}}\right]
    &\leq
    CB^2\E\left[Y^2(1+X^2)
        \mathbf{1}_{\{\abs{X}>M_B\}}\right]\\
    &\quad
    +CB^2\delta^2\E\left[Y^2e^{X^2/3}
        \mathbf{1}_{\{\abs{X}>M_B\}}\right]\\
    &=o_B(1).
\end{align*}
Combining this with \cref{eq:weighted-central} proves
\cref{eq:weighted-scalar}.  The consequence
\cref{eq:weighted-scalar-consequence} follows from
\cref{eq:mixed-exponential}.
\end{proof}

\begin{corollary}[Unweighted scalar estimate]
\label[corollary]{cor:unweighted-scalar}
Let $\eta_1,\dots,\eta_d$ be independent Rademacher variables, let
$X=\sum_j a_j\eta_j$ with $\norm{a}_2=1$, and let
$U\sim\Unif[0,1)$ be independent of the signs.  Then, uniformly over the
dimension and all unit coefficient vectors,
\begin{equation}
    \E[(\Q_U(X)-X)^2]
    \leq
    \left(\frac{\pi\sqrt{3}}{2}+o_B(1)\right)B^{-2}.
    \label{eq:unweighted-scalar}
\end{equation}
\end{corollary}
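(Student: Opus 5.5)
The natural plan is to reuse the proof of \cref{lem:weighted-scalar} with the weight $Y^2$ replaced by the constant $1$. Equivalently, we may take $c=a$ is not needed; we simply rerun the argument with $Y\equiv1$, which only simplifies it. We split $\E[\psi_B(X)]$ over the central event $\{\abs{X}\leq M_B\}$ and its complement. On the central event, \cref{eq:scalar-central} and \cref{eq:F-identity}, together with $B^2\delta^2=1+O(B^{-1})$, give
\[
    B^2\E\bigl[\psi_B(X)\mathbf{1}_{\{\abs{X}\leq M_B\}}\bigr]
    \leq(1+o_B(1))\frac{\pi}{2}\E e^{X^2/3}.
\]
By \cref{eq:unweighted-exponential}, this is at most $(1+o_B(1))\pi\sqrt3/2$. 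The $o_B(1)$ term is uniform because $\E e^{X^2/3}\le\sqrt3$ is dimension-free.

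For the tail, we would replace the weighted tail function by $T_0(s)=\Pr(\abs{X}>s)$. The Rademacher MGF bound \cref{eq:rademacher-mgf} and Markov's inequality (Chernoff with $\lambda=s$) give
\[
    T_0(s)\le2e^{-s^2/2}.
\]
This is even better than \cref{eq:weighted-tail-function}. Alternatively, one can note that \cref{eq:tilted-second-moment} was never needed beyond this, or simply cite \cref{eq:weighted-tail-function} with the bound $1+s^2$ kept harmlessly. The same layer-cake computations as in \cref{eq:weighted-poly-tail,eq:weighted-exp-tail} then yield
\[
    \E[X^2\mathbf{1}_{\{\abs X>M\}}]\le C(1+M^2)e^{-M^2/2}
    \quad\text{and}\quad
    \E[e^{X^2/3}\mathbf{1}_{\{\abs X>M\}}]\le C(1+M^2)e^{-M^2/6}.
\]
The second integral converges because $e^{s^2/3}e^{-s^2/2}=e^{-s^2/6}$.

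With $M=M_B=\sqrt{5\log B}$, the first bound gives $B^2\E[(1+X^2)\mathbf{1}_{\{\abs X>M_B\}}]\le CB^2(\log B)B^{-5/2}=o_B(1)$. The second bound gives $\E[e^{X^2/3}\mathbf{1}_{\{\abs X>M_B\}}]=O(\log B\cdot B^{-5/6})=o_B(1)$. Multiplying the latter by $B^2\delta^2=O(1)$ keeps it $o_B(1)$. Plugging both into the global bound \cref{eq:scalar-global} shows that the tail contribution to $B^2\E\psi_B(X)$ is $o_B(1)$, uniformly in $d$ and $a$. Adding the central and tail parts gives \cref{eq:unweighted-scalar}.

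The only step needing care is the exponential tail integral. It must decay despite the growing weight $e^{s^2/3}$, and this works precisely because the sub-Gaussian rate $1/2$ exceeds $1/3$. The choice $M_B^2=5\log B$ is exactly what makes the polynomial term $B^2e^{-M_B^2/2}$ vanish. Everything else is identical to, and simpler than, the weighted case already proved.
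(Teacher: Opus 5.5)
Your proof is correct. It reaches the corollary by rerunning the proof of \cref{lem:weighted-scalar} with the weight removed, whereas the paper obtains the corollary as a formal special case of that lemma.

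The paper adjoins an auxiliary Rademacher sign $\eta_0$, independent of everything else, and sets $\widetilde X=X+0\cdot\eta_0$ and $Y=\eta_0$. Both are normalized Rademacher sums in the same signs, and $Y^2=1$. Hence \cref{eq:weighted-scalar} becomes $\E[(\Q_U(X)-X)^2]\le(\frac{\pi}{2}\E e^{X^2/3}+o_B(1))B^{-2}$, and \cref{eq:unweighted-exponential} finishes the proof with no recomputation.

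This substitution is also what your aside about ``simply citing'' \cref{eq:weighted-tail-function} with $Y\equiv1$ is missing. The constant $1$ is not itself a sum $\sum_j c_j\eta_j$ with $\norm{c}_2=1$, so the lemma does not literally apply to it. However, a single sign (the auxiliary $\eta_0$, or even $\eta_1$ with $c=e_1$) has square identically $1$. Once you make that substitution, you can cite the entire lemma, including its uniform $o_B(1)$, and not just the tail function.

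What your rerun buys is self-containment and a slightly cleaner tail. You need only the plain Chernoff bound $\Pr(\abs{X}>s)\le2e^{-s^2/2}$ from \cref{eq:rademacher-mgf}, not the tilted estimate \cref{eq:tilted-second-moment}. Your central and tail computations are all correct, including $B^2\E[(1+X^2)\mathbf{1}_{\{\abs{X}>M_B\}}]=O(B^{-1/2}\log B)$ and $\E[e^{X^2/3}\mathbf{1}_{\{\abs{X}>M_B\}}]=O(B^{-5/6}\log B)$; the latter is in fact $O(B^{-5/6})$. What the paper's reduction buys is that nothing has to be re-verified.
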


\begin{proof}
Let $\eta_0$ be a Rademacher sign independent of $U,\eta_1,\dots,\eta_d$, and write
$\widetilde X=X+0\eta_0$ and $Y=\eta_0$.  These are normalized Rademacher sums
built from the same signs, and $Y^2=1$.  The claim follows from
\cref{eq:weighted-scalar,eq:unweighted-exponential}.
\end{proof}

\section{Proof of the main theorem}
\label{sec:main-proof}

\begin{proof}[Proof of \cref{thm:main}]
Taking expectation over $D$ in \cref{prop:conditional-unbiasedness} gives
$\E[\widehat{x}]=x$.

The case $y=0$ is immediate, so assume $y\neq0$.
Since $U_i$ is uniform and independent of $D$,
\[
    \E_D[w_i^2\psi_B(z_i)]
    =\E_{D,U_i}\left[w_i^2(\Q_{U_i}(z_i)-z_i)^2\right].
\]
Applying \cref{eq:weighted-scalar-consequence} to the Rademacher pair in
\cref{eq:row-rademacher} therefore gives, uniformly in $i,d,x,y$,
\begin{equation}
    \E_D[w_i^2\psi_B(z_i)]
    \leq
    \left(\frac{3\pi\sqrt{3}}{2}+o_B(1)\right)B^{-2}.
    \label{eq:row-bound}
\end{equation}
Substituting \cref{eq:row-bound} into \cref{eq:rademacher-reduction} and summing
these $d$ bounds proves \cref{eq:main-mse}.

The communication claim follows because each transmitted index belongs to an
alphabet of size $B=2^b$ and hence uses exactly $b$ bits.  The $d$ indices
therefore require exactly $db$ bits; the shared randomness is not transmitted.
\end{proof}

\begin{corollary}[Vector mean-squared error]
\label[corollary]{cor:vector-mse}
Under the hypotheses of \cref{thm:main},
\begin{equation}
    \E\norm{\widehat{x}-x}_2^2
    \leq
    \left(\frac{\pi\sqrt{3}}{2}+o_B(1)\right)B^{-2}.
    \label{eq:vector-mse}
\end{equation}
The convergence is uniform over all $d$ admitting a normalized Hadamard matrix
and all $x\in\Sph^{d-1}$.
\end{corollary}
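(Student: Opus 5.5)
The plan is to use the orthogonality of $DH^\top$ to turn the vector error into a sum of scalar coordinate errors, and then apply \cref{cor:unweighted-scalar} to each coordinate. No cross terms need to be controlled, so pairwise independence of the dithers is not even used.

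First, the decoder gives $\widehat{x}-x=DH^\top(e/\sqrt{d})$, where $e_i=\Q_{U_i}(z_i)-z_i$ as in \cref{eq:z-e-definition}. Since $D$ is a signed diagonal matrix and $H$ is orthogonal, $DH^\top$ is orthogonal, so
\[
    \norm{\widehat{x}-x}_2^2=\frac{1}{d}\sum_{i=1}^d e_i^2
\]
holds pointwise. Taking expectations,
\[
    \E\norm{\widehat{x}-x}_2^2=\frac{1}{d}\sum_{i=1}^d\E\bigl[(\Q_{U_i}(z_i)-z_i)^2\bigr].
\]

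Next, fix a coordinate $i$. As in \cref{eq:row-rademacher}, $z_i=\sum_j x_j\eta_j$ with $\eta_j=\sqrt d H_{ij}\eps_j$ independent Rademacher signs and $\norm{x}_2=1$. The dither $U_i$ is uniform on $[0,1)$ by \cref{lem:pairwise-dither}. It is also independent of $D$, since $(V_0,V_1)$ is independent of $D$. So the pair $(z_i,U_i)$ has exactly the joint law assumed in \cref{cor:unweighted-scalar}. That corollary then gives, for each $i$,
\[
    \E\bigl[(\Q_{U_i}(z_i)-z_i)^2\bigr]\leq\left(\frac{\pi\sqrt3}{2}+o_B(1)\right)B^{-2},
\]
with the $o_B(1)$ term uniform in $d$, in the row $i$, and in $x$. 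Averaging these $d$ bounds with the prefactor $1/d$ gives \cref{eq:vector-mse} with the same uniformity.

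There is no real obstacle here. The only point to verify carefully is the distributional matching in the second step, namely that the marginal law of each $U_i$ is uniform and independent of the signs. This is exactly what \cref{lem:pairwise-dither} together with the independence of $(V_0,V_1)$ from $D$ provides. As an alternative, one could take $y$ to range over an orthonormal basis in \cref{thm:main} and sum, but that route loses the constant by a factor of $3$. The direct argument above avoids this loss because the weight is trivially $Y^2=1$.
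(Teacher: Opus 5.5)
Your proposal is correct and follows the paper's proof essentially line for line. You use orthogonality of $DH^\top$ to get $\norm{\widehat{x}-x}_2^2=\frac{1}{d}\sum_{i=1}^d e_i^2$, then apply \cref{cor:unweighted-scalar} to each coordinate, using only that $z_i$ is a normalized Rademacher sum and that $U_i$ is uniform and independent of $D$; your observation that pairwise independence plays no role here is also accurate.
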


\begin{proof}
Orthogonality and \cref{eq:z-e-definition} give
\[
    \norm{\widehat{x}-x}_2^2
    =\frac{1}{d}\sum_{i=1}^d e_i^2.
\]
For every fixed row $i$, the first identity in \cref{eq:row-rademacher}
represents $z_i$ as a normalized Rademacher sum under the randomness in $D$.
Moreover, $U_i$ is uniform and independent of $D$, so
\cref{cor:unweighted-scalar} applies to each coordinate.  Averaging proves
\cref{eq:vector-mse}.  Thus the affine dither family matches the
vector-distortion guarantee of Feng et al.~\cite{feng2026provable}.
\end{proof}

\section*{Acknowledgements}

The proof was first obtained using a fully automated Gemini-based agentic
system developed internally at Google. The authors have verified the proof and
edited it for clarity of presentation, and take responsibility for the final
version. The authors would like to thank Ying Feng and Piotr Indyk for helpful
discussions.

\bibliographystyle{alpha}
\bibliography{pairwise_dither_references}

\end{document}